\documentclass[12pt,a4paper]{article}

\usepackage{amsmath}
\usepackage{amssymb}
\usepackage{graphicx}
\usepackage{amsthm}
\usepackage{amsfonts}
\usepackage{amscd}

\newtheorem{thm}{Theorem}[section]
\newtheorem{cor}[thm]{Corollary}

\newtheorem{example}[thm]{Example}
\newtheorem{theorem}[thm]{Theorem}

\newtheorem{lemma}[thm]{Lemma}

\newtheorem{definition}[thm]{Definition}

\newtheorem{remark}[thm]{Remark}
\newtheorem{fact}[thm]{Fact}

\begin{document}
\title
{Applications of Gaussian Binomials to Coding Theory for Deletion Error Correction}

\author{
Manabu HAGIWARA
\thanks{
Department of Mathematics and Informatics,
Graduate School of Science,
Chiba University
1-33 Yayoi-cho, Inage-ku, Chiba City,
Chiba Pref., JAPAN, 263-0022
}
 \and
Justin KONG
\thanks{
Department of Mathematics,
University of Hawaii at Manoa,
2565 McCarthy Mall (Keller Hall 401A),
Honolulu, Hawaii, U.S.A., 96822
}
}

\date{}

\maketitle

\begin{abstract}
We present new applications on $q$-binomials, also 
known as Gaussian binomial coefficients.
Our main theorems determine cardinalities 
of certain error-correcting codes based on Varshamov-Tenengolts codes
and prove a curious phenomenon relating to deletion sphere for specific cases.
\end{abstract}

\section{Introduction}
$q$-binomials \cite{stanley2013algebraic}, 
also known as Gaussian 
binomial coefficients 
\cite{cohn2004projective}
are $q$-analogs of the binomial coefficients.
They are well-known and well-studied, with important 
combinatorial implications and have properties 
analogous to binomial coefficients 
\cite{dhand2014combinatorial,kim2008q-bernoulli,pak2013strict,stanley2016some}.
However, to the best of the authors' knowledge, 
they have not been considered from 
the perspective of coding theory for deletion errors.

Terminology is defined precisely in subsequent sections, 
but here we give an informal description of the descent 
moment distribution. First, a descent vector 
(also studied in \cite{levenshtein1992perfect}) 
is a binary $01$-vector 
that indicates the indices of 
descent in an associated vector. 
The moment of a $01$-vector 
(also studied in 
\cite{levenshtein1966binary,levenshtein1967asymptotically,levenshtein1992perfect,tenengolts1965correction}) is a 
summation of the product of the index 
by the value of the $01$-vector. 
A descent moment is simply the amalgam of these 
two concepts, and the ``descent moment distribution" 
of a set of vectors is the polynomial whose 
coefficients indicate the number of vectors 
having a particular descent moment in the given set.

The main contributions of this paper 
relate to a class of deletion codes.
This provides implications for calculating the 
cardinality of sets that are of interest in the theory of error-correcting codes.

The descent moment distribution in the formula of the main theorem 
above is taken over certain sets of interest.
These sets are related to 
a well-known class of sets studied by R. P. Stanley 
known as VT (Varshamov-Tenengolts) codes \cite{stanley1972study}
(also known as special cases of Levenshtein codes  
\cite{tenengolts1965correction}).
In his study, Stanley obtained an exact formula for the cardinality of 
VT codes by considering a certain moment distribution in 
conjunction with the Hamming weight. 
The formula was non-trivial, and involved the sum of 
M\"obius functions and Euler functions. 
His formula was for the original VT codes, 
but other related sets, in particular permutation and 
multi-permutation codes based on VT codes, 
do not have such formulas. Moreover, only the moment 
distribution was considered, not the descent moment 
distribution nor any relation to $q$-binomials. 

Partial results were presented at the IEEE International 
Symposium on Information Theory (ISIT) 2018 \cite{DBLP:conf/isit/HagiwaraK18}.

\section{Preliminaries and Remarks}\label{Contribution}
\subsection{Descent Moment Distribution}
Let $\mathbf{x}:= x_1 x_2 \dots x_l$ be an element of $\{A, B\}^l $,
where $\{A, B\}$ is a binary ordered set with $A < B$.
Instead of $(x_1, x_2, \dots, x_l)$, the notation $x_1 x_2 \dots x_l$ is used in this paper.
A 01-vector $\mathbf{y} := y_1 y_2 \dots y_{l-1} \in \{ 0,1 \}^{l-1}$ is called 
a \textit{descent vector} of $\mathbf{x}$ if
\[
 y_i = 
\begin{cases}
1 & (x_i > x_{i+1}), \\
0 & (\text{otherwise}).
\end{cases}
\]
We denote 
the descent vector of $\mathbf{x}$ by $\delta(\mathbf{x})$. 
Sets considered in this paper are often 
defined via conditions with descent vectors.
For a 01-vector $\mathbf{y}$,
the \textit{moment} of $\mathbf{y} := y_1 y_2 \dots y_{l-1}$ is defined as
$y_1 + 2 y_2 + \dots + (l-1) y_{l-1}$.
The moment is denoted by $\rho ( \mathbf{y})$.
Note that the moment does not belong to a binary field 
but rather it is defined as an integer, whereas $\mathbf{y}$ is a 01 vector.
For a set $C$ of binary sequences,
we introduce the following polynomial $\mathrm{DM}(C)$ of $q$ as 
our primary interest:
\[
\mathrm{DM}(C) := 
\sum_{ \mathbf{x} \in C} q^{ \rho \circ \delta ( \mathbf{x} ) },
\]
where $\rho \circ \delta( \mathbf{x}) := \rho( \delta (\mathbf{x}) )$.

\begin{remark}\label{rem:major}
It is easy to see that
$$\rho  \circ \delta( \mathbf{x}) = \sum_{1 \le i < n, x_{i} > x_{i+1} } i.$$
The right hand side is well-known 
as the \textbf{major index} of $\mathbf{x}$.

In this paper, we call $\mathrm{DM}()$ the \textbf{descent moment distribution}
for connecting coding theory,
while it is the statistic of major index.
\end{remark}

The Hamming weight distribution is an object similar to $\mathrm{DM}$
\cite{macwilliams1963theorem}.
If $A = 0$ and $B = 1$
the distribution is 
defined as
$
\sum_{ \mathbf{x} \in C} Y^{\mathrm{wt}( \mathbf{x} )},
$
where $\mathrm{wt}( \mathbf{x} )$ is the Hamming weight of $\mathbf{x}$,
i.e., the number of non-zero entries of $\mathbf{x}$.
Both distributions may be applied to obtain the cardinality $\# C$ by substituting $1$ for their variable:
\[
\# C =
\sum_{ \mathbf{x} \in C} 1
=
\sum_{ \mathbf{x} \in C} 1^{ \rho \circ \delta ( \mathbf{x} ) }
=
\sum_{ \mathbf{x} \in C} 1^{\mathrm{wt}( \mathbf{x} )}.
\]
Another related distribution with both the moment and Hamming weight is: 
$$
\sum_{ \mathbf{x} \in C} q^{\rho( \mathbf{x} )} Y^{ \mathrm{wt}( \mathbf{x} ) },
$$
which is used to obtain the Hamming weight distribution for VT (Varshamov-Tenengolts) 
codes (see \ref{subsec:Levenshtein} in  \cite{stanley1972study}).

Notice that the descent moment distribution for the union of 
disjoint sets is equal to the sum of their descent moment distributions.
\begin{lemma}\label{DM_SUM}
For $C_1, C_2 \subset \{ A, B \}^l$,
if $C_1 \cap C_2 = \emptyset$,
$$
\mathrm{DM}( C_1 \cup C_2 ) = \mathrm{DM}(C_1) + \mathrm{DM}( C_2 ).
$$
\end{lemma}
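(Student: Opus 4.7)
The plan is to unfold the definition of $\mathrm{DM}$ and exploit the elementary fact that summation over a disjoint union splits as a sum of summations over each piece. Concretely, I would start from
\[
\mathrm{DM}(C_1 \cup C_2) = \sum_{\mathbf{x} \in C_1 \cup C_2} q^{\rho \circ \delta(\mathbf{x})},
\]
and then, using $C_1 \cap C_2 = \emptyset$, partition the index set of this sum into the two disjoint pieces $C_1$ and $C_2$, writing the right-hand side as
\[
\sum_{\mathbf{x} \in C_1} q^{\rho \circ \delta(\mathbf{x})} + \sum_{\mathbf{x} \in C_2} q^{\rho \circ \delta(\mathbf{x})}.
\]
Each of these two summands is by definition $\mathrm{DM}(C_1)$ and $\mathrm{DM}(C_2)$ respectively, which yields the desired equality.

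There is no real obstacle here: the lemma is a direct consequence of the additivity of finite summation over a disjoint union of index sets, and the disjointness hypothesis is used precisely to guarantee that no term $q^{\rho \circ \delta(\mathbf{x})}$ is counted twice when passing from the union to the two separate sums. The only point worth emphasizing in the write-up is that the exponent $\rho \circ \delta(\mathbf{x})$ depends only on $\mathbf{x}$ and not on which set we regard $\mathbf{x}$ as belonging to, so the same monomial appears on both sides for each $\mathbf{x}$.
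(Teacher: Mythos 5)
Your proposal is correct and follows essentially the same route as the paper's own proof: unfold the definition of $\mathrm{DM}$, split the sum over the disjoint union $C_1 \cup C_2$ into the two sums over $C_1$ and $C_2$, and recognize each as $\mathrm{DM}(C_1)$ and $\mathrm{DM}(C_2)$. Your additional remark that the exponent $\rho \circ \delta(\mathbf{x})$ depends only on $\mathbf{x}$ is a fine clarifying touch but not needed beyond what the paper already does.
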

\begin{proof}
\begin{align*}
\mathrm{DM}( C_1 \cup C_2) &=
\sum_{ \mathbf{x} \in C_1 \cup C_2 } q^{ \rho \circ \delta ( \mathbf{x} ) }\\
&= 
\sum_{ \mathbf{x} \in C_1 } q^{ \rho \circ \delta ( \mathbf{x} ) }
+
\sum_{ \mathbf{x} \in C_2 } q^{ \rho \circ \delta ( \mathbf{x} ) }\\
&=
\mathrm{DM}(C_1) + \mathrm{DM}(C_2) 
\end{align*}
\end{proof}
\subsection{$q$-integer, $q$-factorial and $q$-binomial}
The notion of a $q$-analogue is a general notion in pure mathematics
for generalizing or extending a mathematical object.
For a mathematical object $f$,
another mathematical object $F(q)$ is called a $q$-analogue
of $f$ if $f = F(1)$ or $f = \lim_{q \rightarrow 1} F(q)$.

For a positive integer $i$,
the \textit{$q$-integer} $[i]$ is defined as
\[ 
[i] := \frac{q^i - 1}{q - 1}
= 1 + q + \dots + q^{i-1},
\]
and the \textit{$q$-factorial} $[i]!$ is defined as
\begin{align*}
[i]! &:= [i] [i-1] \dots [2] [1],\\
[0]! &:= 1.
\end{align*}
Using $q$-factorials, for non-negative integers $i$ and $j$,
we define the \textit{$q$-binomial} as
\[
\begin{bmatrix} i \\ j \end{bmatrix}
:=
\begin{cases}
\dfrac{[i]!}{[j]![i-j]!} & (i \ge j \ge 0),\\
0 & (\textrm{otherwise}).\\
\end{cases}
\]
$q$-binomial is also called a \textbf{Gaussian binomial coefficient}.
It is easy to see that the $q$-integer $[i]$, the $q$-factorial $[i]!$, and the $q$-binomial$ \begin{bmatrix} i \\ j \end{bmatrix}$
are $q$-analogues of the integer $i$, the factorial $i!$, and the binomial $\binom{i}{j}$ respectively.

$q$-binomials are known to correlate to certain weight distributions of lattice paths.
Let us consider the set $L$ of lattice paths from $(0,0)$ to $(j, i-j)$.
As is well-known, its cardinality $\# L$ is given by $\binom{i}{j}$.
By defining the weight $S(p)$ of a path $p$ as the number of squares
which are on the north-western side of the path,
the following is also well-known \cite{stanley2013algebraic} (see Example \ref{example:path22} below):
\begin{align}
\sum_{p \in L} q^{S(p)} =
\begin{bmatrix} i \\ j \end{bmatrix}.\label{known fact}
\end{align}
\begin{example}[Paths from $(0,0)$ to $(2,2)$]\label{example:path22}
There are 6 paths from $(0,0)$ to $(2, 2)$ (see Figure \ref{figure:smallPath}).
\begin{figure}[htbp]
\begin{center}
\includegraphics[width=11cm,bb=0 0 348 203]{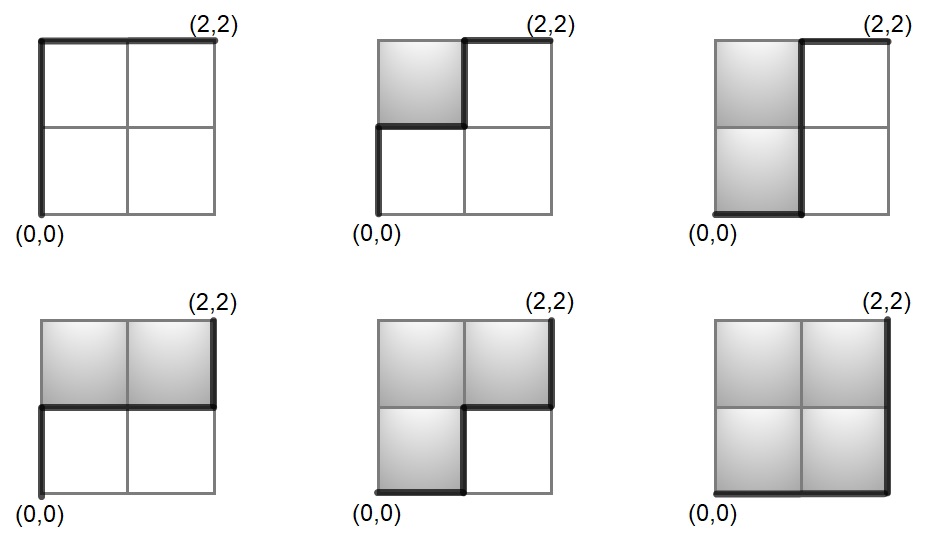} 
\caption{lattice paths from $(0,0)$ to $(2,2)$ }
\label{figure:smallPath}
\end{center}
\end{figure}
Their weights are $0$, $1$, $2$, $2$, $3$, and $4$.
Hence the weight distribution is
\[
1 + q + 2 q^2 + q^3 + q^4.
\]
On the other hand, the $q$-binomial with $i=4$ and $j=2$ is
\begin{align*}
\begin{bmatrix} 4 \\ 2 \end{bmatrix}
&=\frac{[4][3][2][1]}{[2][1][2][1]}\\
&=\frac{[4] [3]}{[2] [1]}\\
&=\frac{(1+q+q^2+q^3)(1+q+q^2)}{(1+q)(1)}\\
&=1+q+2 q^2 + q^3 + q^4.
\end{align*}
\end{example}

The following is used in the proof of Corollary \ref{theorem:Card}.

\begin{lemma}\label{app:2}
Let $\zeta$ be the $d$th primitive root of $1$
and $\zeta^{\alpha + \beta} = 1$.
\[
\lim_{q \rightarrow \zeta} \begin{bmatrix} \alpha + \beta \\ \beta \end{bmatrix}
=
\begin{cases}
\displaystyle \binom{ \frac{\alpha + \beta}{d} }{\frac{\beta}{d} } & d | \langle \alpha, \beta \rangle,\\
0                               &\text{otherwise,}
\end{cases}
\]
where $\langle \alpha, \beta \rangle$ is the greatest common divisor of $\alpha$ and $\beta$.
\end{lemma}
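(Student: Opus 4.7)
My plan is to exploit the product presentation
\[
\begin{bmatrix} \alpha+\beta \\ \beta \end{bmatrix}
=\frac{\prod_{i=1}^{\alpha+\beta}(1-q^i)}{\prod_{i=1}^{\alpha}(1-q^i)\prod_{i=1}^{\beta}(1-q^i)}
\]
(obtained by substituting $[i]=(1-q^i)/(1-q)$ in the definition and cancelling the common $(1-q)^{\alpha+\beta}$), and to analyze the order of vanishing of each factor at $q=\zeta$. Since $1-\zeta^{i}=0$ precisely when $d\mid i$, and $d\mid\alpha+\beta$ by hypothesis, the numerator vanishes to order $(\alpha+\beta)/d$ while the denominator vanishes to order $\lfloor\alpha/d\rfloor+\lfloor\beta/d\rfloor$.

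In the case $d\nmid\langle\alpha,\beta\rangle$, I would write $\alpha=dq_1+r$ with $0<r<d$, which forces $\beta=dq_2+(d-r)$, so the denominator vanishes to order $q_1+q_2=(\alpha+\beta)/d-1$, strictly less than the numerator. Recalling the classical fact that the $q$-binomial is a polynomial in $q$, evaluation at $q=\zeta$ is unambiguous, and the rational function above has a simple zero at $\zeta$; hence the limit is $0$.

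In the remaining case $d\mid\langle\alpha,\beta\rangle$, my plan is to split each product $\prod_{i=1}^{n}(1-q^{i})$ (for $n\in\{\alpha,\beta,\alpha+\beta\}$) into the subproduct over indices divisible by $d$ and the subproduct over the rest. Grouping the leftover indices by residue class modulo $d$ and using $\zeta^{d}=1$ together with the identity $\prod_{j=1}^{d-1}(1-\zeta^{j})=d$ (read off from $\frac{x^d-1}{x-1}=\prod_{j=1}^{d-1}(x-\zeta^{j})$ evaluated at $x=1$), each such nonvanishing subproduct evaluates at $q=\zeta$ to $d^{n/d}$, and the three contributions cancel since $(\alpha+\beta)/d=\alpha/d+\beta/d$. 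For the indices divisible by $d$, the substitution $u=q^{d}$ converts their ratio into
\[
\frac{\prod_{k=1}^{(\alpha+\beta)/d}(1-u^{k})}{\prod_{k=1}^{\alpha/d}(1-u^{k})\prod_{k=1}^{\beta/d}(1-u^{k})},
\]
which is the $q$-binomial $\begin{bmatrix}(\alpha+\beta)/d\\\beta/d\end{bmatrix}$ in the variable $u$; its limit as $u\to 1$ is the ordinary binomial $\binom{(\alpha+\beta)/d}{\beta/d}$, as desired.

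The main obstacle I anticipate is the bookkeeping in the divisible case: one must verify carefully that the nonvanishing contributions cancel exactly, which requires both the identity $\prod_{j=1}^{d-1}(1-\zeta^{j})=d$ and a clean grouping of the leftover factors by residue class modulo $d$, and then check that the $u$-substitution indeed returns a genuine $q$-binomial (after the residual $(1-u)$ factors balance out). The non-divisible case, by contrast, is immediate once the zero-order comparison is made.
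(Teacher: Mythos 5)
Your proposal is correct and follows essentially the same route as the paper's proof: both compare the orders of vanishing of numerator and denominator at $q=\zeta$ (yielding $0$ when $d\nmid\langle\alpha,\beta\rangle$, since the numerator then vanishes to strictly higher order) and, in the divisible case, cancel the factors whose index is not a multiple of $d$ and identify the surviving ratio over multiples of $d$ with the ordinary binomial $\binom{(\alpha+\beta)/d}{\beta/d}$. The only difference is bookkeeping in that second case: you evaluate the leftover factors via $\prod_{j=1}^{d-1}(1-\zeta^{j})=d$ and the substitution $u=q^{d}$, whereas the paper takes termwise limits of ratios of $q$-integers $[kd]/[ld]\to k/l$; your version is, if anything, slightly more self-contained.
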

\begin{proof}
The assumption $\zeta^{\alpha + \beta} = 1$ implies that $d | \alpha + \beta$,
and in particular $1 \le d \le \alpha + \beta$.
The number of zero factors of $[\alpha + \beta]!$ for substituting $\zeta$ to $q$
is $\lfloor (\alpha + \beta) / d \rfloor$ and the number of zero factors of $[\alpha]! [\beta]!$ is
$\lfloor \alpha /d \rfloor + \lfloor \beta / d \rfloor$.

If $d$ does not divide $\langle \alpha, \beta \rangle$,
it implies $\lfloor (\alpha + \beta) / d \rfloor > \lfloor \alpha /d \rfloor + \lfloor \beta / d \rfloor$.
Hence $\lim_{q \rightarrow \zeta} \begin{bmatrix}  \alpha + \beta \\ \beta \end{bmatrix} = 0$.

If $d$ divides $\langle \alpha, \beta \rangle$,
it implies $\lfloor (\alpha + \beta) / d \rfloor = \lfloor \alpha /d \rfloor + \lfloor \beta / d \rfloor$.
Note that 
\begin{align*}
\lim_{q \rightarrow \zeta} \begin{bmatrix} cd \\ bd\end{bmatrix} &= \frac{c}{b},
&
\text{and }
\lim_{q \rightarrow \zeta} \begin{bmatrix} a + cd \\ a + bd\end{bmatrix}
&= 1,
\end{align*}
for $0 < a < d$.
Hence
\begin{align*}
\lim_{q \rightarrow \zeta} \begin{bmatrix} \alpha + \beta \\ \beta \end{bmatrix}
&=
\lim_{q \rightarrow \zeta}
\frac{[\alpha + \beta] [\alpha + \beta - d] \dots [d] }{([\alpha] [\alpha - d] \dots [d] )( [\beta] [\beta - d] \dots [d] ) }\\
&=
\frac{(\alpha + \beta) (\alpha + \beta - d) \dots (d) }{( (\alpha) (\alpha - d) \dots (d) ) ( (\beta) (\beta - d) \dots (d) )}\\
&=
\frac{(\frac{\alpha + \beta}{d}) (\frac{\alpha + \beta}{d} - 1) \dots (1) }{( (\frac{\alpha}{d}) (\frac{\alpha}{d}-1) \dots (1) ) ( (\frac{\beta}{d}) (\frac{\beta}{d}-1) \dots (1) ) }\\
&=
\binom{ \frac{\alpha + \beta}{d} }{ \frac{\beta}{d} }.
\end{align*}
\end{proof}
\subsection{Major Index and $q$-binomial}
For positive integers $\alpha$ and $\beta$, let $C_{\alpha, \beta}$ be the set of vectors 
with $\alpha$ entries of $A$ and $\beta$ entries of $B$.
Hence $C_{\alpha, \beta}$ consists of $\binom{\alpha + \beta}{\alpha}$ elements 
that are obtained by all permutations to $AA \dots A BB \dots B$.

The following is well-known for major index.
\begin{fact}[See \cite{stanley1997enumerative}]\label{theorem:DM}
For any positive integers $\alpha$ and $\beta$,
\begin{align*}
\mathrm{DM}( C_{\alpha, \beta} )
(
=
\sum_{0 \le m < \alpha + \beta}
\mathrm{DM}( C_{\alpha, \beta, m} )
)
=
\begin{bmatrix} \alpha + \beta \\ \beta \end{bmatrix}
\end{align*}
\end{fact}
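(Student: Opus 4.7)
The plan is to identify $\mathrm{DM}(C_{\alpha,\beta})$ with the lattice-path weight distribution appearing in equation~(\ref{known fact}). Setting $i=\alpha+\beta$ and $j=\beta$ in that equation, I consider the bijection $\phi : C_{\alpha,\beta} \to L$ onto the set $L$ of lattice paths from $(0,0)$ to $(\beta, \alpha)$ that sends $\mathbf{x} = x_1 x_2 \cdots x_{\alpha+\beta}$ to the path whose $k$-th step is east when $x_k = B$ and north when $x_k = A$. Define the inversion number $\mathrm{inv}(\mathbf{x}) := \#\{(k,l) : k < l,\ x_k = B,\ x_l = A\}$. A direct geometric argument shows $S(\phi(\mathbf{x})) = \mathrm{inv}(\mathbf{x})$: each unit square northwest of $\phi(\mathbf{x})$ is determined uniquely by the east step along its southern edge together with the (later) north step along its western edge, hence by a $B$ in $\mathbf{x}$ appearing before an $A$. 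By~(\ref{known fact}),
\[
\sum_{\mathbf{x} \in C_{\alpha,\beta}} q^{\mathrm{inv}(\mathbf{x})} = \begin{bmatrix} \alpha+\beta \\ \beta \end{bmatrix}.
\]

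Next I would invoke MacMahon's classical equidistribution theorem, which asserts that the major index and the inversion number have the same distribution on $C_{\alpha,\beta}$:
\[
\sum_{\mathbf{x} \in C_{\alpha,\beta}} q^{\mathrm{maj}(\mathbf{x})} \;=\; \sum_{\mathbf{x} \in C_{\alpha,\beta}} q^{\mathrm{inv}(\mathbf{x})}.
\]
By Remark~\ref{rem:major}, the left-hand side is exactly $\mathrm{DM}(C_{\alpha,\beta})$, and combining this with the preceding display yields the claim.

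The main obstacle is the equidistribution statement, because the two statistics behave very differently under naive decompositions of $C_{\alpha,\beta}$: splitting by the first or last letter causes the major index to shift in a way that depends on the entire descent set of the shorter word, so no recurrence in $\mathrm{maj}$ alone arises. The cleanest route is Foata's fundamental bijection on words, which processes the letters of $\mathbf{x}$ left to right and rewrites them so that a unit of major index on the prefix is converted into a unit of inversion number. Verifying that Foata's map is a multiset-preserving bijection is routine but indispensable; a detailed exposition is given in~\cite{stanley1997enumerative}, which is exactly the reference cited in the statement of the fact. An alternative that avoids MacMahon's theorem is to construct a direct bijection between $C_{\alpha,\beta}$ and partitions fitting in an $\alpha \times \beta$ rectangle under which $\mathrm{maj}(\mathbf{x})=|\lambda|$, and then use the well-known identity $\sum_\lambda q^{|\lambda|} = \begin{bmatrix}\alpha+\beta\\ \beta\end{bmatrix}$; however, constructing such a bijection is essentially of the same difficulty as Foata's argument.
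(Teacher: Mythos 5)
The paper states this as a Fact without proof, deferring entirely to \cite{stanley1997enumerative}; your argument---identifying $\mathrm{DM}$ with the major index via Remark~\ref{rem:major}, computing the inversion generating function through the lattice-path weight of Eq.~(\ref{known fact}), and bridging the two statistics by MacMahon's equidistribution theorem (Foata's bijection)---is correct and is precisely the standard proof contained in that reference. The only loose point is the phrase ``east step along its southern edge,'' since a square northwest of the path need not be adjacent to the path; the correct correspondence pairs the square in column $c$ and row $r$ with the $c$-th east step and the $r$-th north step, the square lying northwest of the path exactly when the former precedes the latter, which is what your inversion count records.
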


We partition $C_{\alpha, \beta}$ into $\alpha + \beta$ subsets as follows:
\[
C_{\alpha, \beta, m}:= \{ \mathbf{x} \in C_{\alpha, \beta} \mid \rho \circ \delta ( \mathbf{x} ) \equiv m \pmod{\alpha + \beta}\}.
\]

Since $\{ C_{\alpha, \beta, m} \}_{0 \le m < \alpha + \beta}$ partitions $C_{\alpha, \beta}$, 
by Lemma \ref{DM_SUM},
we remark the following:
\begin{remark}
\[
\mathrm{DM}( C_{\alpha, \beta} ) = \sum_{0 \le m < \alpha + \beta} \mathrm{DM}( C_{\alpha, \beta, m} ).
\]
\end{remark}\label{DM_remark}

From Fact \ref{theorem:DM} and the definition of $C_{\alpha, \beta, m}$,
we have the following:
\begin{cor}
\begin{align}
\sum_{0 \le m < \alpha + \beta}
\# C_{\alpha, \beta, m} q^m
\equiv
\begin{bmatrix} \alpha + \beta \\ \beta \end{bmatrix}
\pmod{ q^{\alpha + \beta} - 1 }
.
\label{eq:main2}
\end{align}
\end{cor}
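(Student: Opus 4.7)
The plan is to reduce the descent moment distribution modulo $q^{\alpha+\beta}-1$ termwise and then apply Fact \ref{theorem:DM}. The key observation is that working modulo $q^{\alpha+\beta}-1$ collapses the exponent $\rho \circ \delta(\mathbf{x})$ to its residue class modulo $\alpha+\beta$, which is precisely the index $m$ that was used to partition $C_{\alpha,\beta}$ into the subsets $C_{\alpha,\beta,m}$.

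First, I would fix $m$ with $0 \le m < \alpha+\beta$ and note that for every $\mathbf{x} \in C_{\alpha,\beta,m}$, by definition $\rho \circ \delta(\mathbf{x}) = m + k(\alpha+\beta)$ for some non-negative integer $k$, so $q^{\rho \circ \delta(\mathbf{x})} \equiv q^m \pmod{q^{\alpha+\beta}-1}$. Summing this congruence over $\mathbf{x} \in C_{\alpha,\beta,m}$ gives
\[
\mathrm{DM}(C_{\alpha,\beta,m}) \equiv \# C_{\alpha,\beta,m} \cdot q^m \pmod{q^{\alpha+\beta}-1}.
\]

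Next I would sum over $m$ from $0$ to $\alpha+\beta-1$ and invoke the remark immediately preceding the corollary (which itself follows from Lemma \ref{DM_SUM} applied to the partition $\{C_{\alpha,\beta,m}\}$) to rewrite the left-hand side as $\mathrm{DM}(C_{\alpha,\beta})$. Finally, Fact \ref{theorem:DM} identifies this with the Gaussian binomial $\begin{bmatrix} \alpha+\beta \\ \beta \end{bmatrix}$, yielding the claimed congruence.

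There is essentially no obstacle here; the result is a direct bookkeeping consequence of the partition definition, the additivity of $\mathrm{DM}$ on disjoint unions, and the known major-index identity. The only thing worth being careful about is ensuring that the congruence is consistent with the convention that both sides are polynomials in $q$ with integer coefficients, so that reduction modulo $q^{\alpha+\beta}-1$ makes sense; this is immediate since $\mathrm{DM}$ and the $q$-binomial both lie in $\mathbb{Z}[q]$.
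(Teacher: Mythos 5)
Your proposal is correct and is precisely the argument the paper intends: the corollary is stated as an immediate consequence of Fact \ref{theorem:DM} and the definition of $C_{\alpha,\beta,m}$, and your termwise reduction $q^{\rho\circ\delta(\mathbf{x})}\equiv q^m \pmod{q^{\alpha+\beta}-1}$ followed by summing over the partition is exactly the bookkeeping the authors leave implicit. No gaps.
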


\begin{example}[$\alpha = \beta = 2$]
\label{example:alpha2M2qbinom}
As seen in Example \ref{example:path22},
\[
\begin{bmatrix} 2 + 2 \\ 2 \end{bmatrix}
=
1 + q + 2 q^2 + q^3 + q^4.\]

On the other hand,
$C_{2,2,0} = \{AABB, BABA \}$,
$C_{2,2,1} = \{BAAB \}$,
$C_{2,2,2} = \{ABAB, BBAA \}$,
and
$C_{2,2,3} = \{ABBA \}$.

Hence we verify
\[
\sum_{0 \le m < 4} \# C_{2,2, m} q^m
=
2 + q + 2 q^2 + q^3
\equiv
\begin{bmatrix} 2 + 2 \\ 2 \end{bmatrix}
\pmod{q^4 - 1}.
\]
\end{example}

\subsection{Coding Theoretic Remarks: Deletions and Partitions via VT Codes}
\label{subsec:Levenshtein}

Deletion is a combinatorial operation for a sequence.
Single deletions shorten a given sequence.
For example, a sequence $AAAA$ of length $4$ changes to the sequence $AAA$ of length $3$ 
after a single deletion.
Note that a single deletion that occurs in a string of 
consecutive repeated entries results in the same sequence 
regardless of where the deletion occurs.
Indeed, the deletions in either the 1st entry or the 2nd entry from the sequence $AABAAA$
result in the same sequence $ABAAA$.
Hence a sequence $AABAAA$ of length $6$ may be changed by a single deletion 
to one of three possible sequences of length $5$: 
$ABAAA$, $AAAAA$, or $AABAA$.

For a set $C$ of vectors,
we define the set $\mathrm{dS}(C)$ as
the set of sequences
obtained by a single deletion in $C$,
and call it the \textit{deletion sphere} of $C$.
For example, for $C := \{ AABAABB \}$,
$$
\mathrm{dS}(C) = \{ABAABB, AAAABB, AABABB, AABAAB \}.
$$

A maximal consecutive subsequence of repetitions of the same entry
is called a \textit{run}.
For a vector $\mathbf{x}$, the number of runs is denoted by $|| \mathbf{x} ||$
and is called the run number in this paper.
For example, $|| AAAA || = 1$ and $||AABAAA|| = 3$.
The run number $|| \mathbf{x} ||$ is equal to the number of sequences
that are obtained by single deletions to $\mathbf{x}$:
\begin{fact}[\cite{levenshtein1967asymptotically}]
\label{fact:sphere_run}
For any vector $\mathbf{x}$,
$$
\# \mathrm{dS}( \{ \mathbf{x} \}) = || \mathbf{x} ||.
$$
\end{fact}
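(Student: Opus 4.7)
The plan is to establish a bijective correspondence between the runs of $\mathbf{x}$ and the elements of $\mathrm{dS}(\{ \mathbf{x} \})$. Writing $\mathbf{x} = x_1 x_2 \cdots x_l$ and letting $\mathbf{y}^{(i)}$ denote the sequence obtained by deleting the $i$-th entry, we have $\mathrm{dS}(\{ \mathbf{x} \}) = \{ \mathbf{y}^{(1)}, \ldots, \mathbf{y}^{(l)} \}$, so the task reduces to counting the distinct sequences in this list.

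The first step is an adjacency observation: $\mathbf{y}^{(i)} = \mathbf{y}^{(i+1)}$ if and only if $x_i = x_{i+1}$. The ``if'' direction is immediate from unwinding the definition of deletion, and the ``only if'' direction follows because $\mathbf{y}^{(i)}$ and $\mathbf{y}^{(i+1)}$ coincide on every coordinate except the $i$-th, where they carry $x_{i+1}$ and $x_i$ respectively. Applying the ``if'' direction inductively inside a single run shows that all positions belonging to the same run of $\mathbf{x}$ produce the same deletion result, which yields the easy upper bound $\# \mathrm{dS}(\{ \mathbf{x} \}) \le || \mathbf{x} ||$.

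The main obstacle is the matching lower bound: deletions coming from two \emph{different} (not necessarily adjacent) runs must produce distinct sequences. I would prove this by a brief case analysis on the run-length vector $(a_1, a_2, \ldots, a_k)$ of $\mathbf{x}$, where $k = || \mathbf{x} ||$. Deleting from run $R_i$ transforms this vector in one of three ways: if $a_i \ge 2$, it replaces $a_i$ by $a_i - 1$; if $a_i = 1$ with $1 < i < k$, it removes the $i$-th entry and coalesces the two neighbors into a single run of length $a_{i-1} + a_{i+1}$; and if $a_i = 1$ with $i \in \{ 1, k \}$, it simply drops the $i$-th entry (possibly flipping the starting symbol of the output). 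Together with the starting symbol of $\mathbf{y}^{(i)}$, the resulting run-length vector determines $i$ uniquely, so distinct runs produce distinct elements of $\mathrm{dS}(\{ \mathbf{x} \})$. Combining the two bounds yields $\# \mathrm{dS}(\{ \mathbf{x} \}) = || \mathbf{x} ||$.
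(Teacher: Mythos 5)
Your proof is correct. Note that the paper itself offers no proof of this statement: it is recorded as a Fact with a citation to Levenshtein, so there is no in-paper argument to compare against. Your two-step argument is the standard one: the adjacency observation ($\mathbf{y}^{(i)} = \mathbf{y}^{(i+1)}$ iff $x_i = x_{i+1}$) gives the upper bound $\#\mathrm{dS}(\{\mathbf{x}\}) \le ||\mathbf{x}||$, and your case analysis on the run-length vector (the deletion output has $k$, $k-1$, or $k-2$ runs according as the deleted run has length $\ge 2$, is a singleton boundary run, or is a singleton interior run, and within each case the output's starting symbol and run-length vector pin down which run was hit) correctly establishes that distinct runs yield distinct outputs. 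The only implicit ingredient worth making explicit is that over the binary alphabet $\{A,B\}$ a sequence is determined by its starting symbol together with its run-length vector, which is what licenses the final identification step.
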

Hence, the cardinality of $\mathrm{dS}( C )$ for a singleton
depends on its element.

A set $C$ is called a \textit{single deletion correcting code} if
$$
\# \mathrm{dS} (C) = \sum_{\mathbf{x} \in C} \# \mathrm{dS} ( \{ \mathbf{x} \}).
$$
This definition is equivalent to
$$
\mathrm{dS}(C) = \sqcup_{ \mathbf{x} \in C} \mathrm{dS} (\{ \mathbf{x} \}).
$$

Levenshtein showed that the following sets $\mathrm{VT}_{l-1, m}$ are single deletion correcting codes for any positive integer $l$ and 
any integer $m$
\cite{levenshtein1966binary}:
$$
\mathrm{VT}_{l-1, m} := \{ \mathbf{x} \in \{0,1 \}^{l-1} \mid \rho( \mathbf{x} ) \equiv m \pmod{l} \}.
$$
This code $\mathrm{VT}_{l-1, m}$ is called a \textit{VT code}.
The set $C_{\alpha, \beta, m}$ is written by using VT codes:
$$
C_{\alpha, \beta, m} =
\{ \mathbf{x} \in \{A,B \}^{\alpha + \beta} \mid \delta( \mathbf{x} ) \in \mathrm{VT}_{\alpha + \beta-1, m} \}.
$$
The following statement strengthens our motivation to investigate
$C_{\alpha, \beta, m}$.
The proof is a direct corollary of Lemma 3.2 in \cite{levenshtein1992perfect}. 
\begin{theorem}\label{theorem:single_deletion_error_correction}
\label{section:single_deletion_error_correction}
The set $C_{\alpha, \beta, m}$ is a single deletion correcting code.
\end{theorem}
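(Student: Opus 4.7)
The plan is to reduce the claim to the well-known single-deletion-correcting property of VT codes, exploiting the characterization
\[
C_{\alpha,\beta,m} = \{\mathbf{x} \in C_{\alpha,\beta} : \delta(\mathbf{x}) \in \mathrm{VT}_{\alpha+\beta-1,m}\}
\]
already given in the excerpt. I would assume for contradiction that two distinct elements $\mathbf{x}, \mathbf{x}' \in C_{\alpha,\beta,m}$ share a common deletion image $\mathbf{z} \in \{A,B\}^{\alpha+\beta-1}$, and argue that $\mathbf{x} = \mathbf{x}'$. Because every element of $C_{\alpha,\beta,m}$ has exactly $\alpha$ entries of $A$ and $\beta$ entries of $B$, the word $\mathbf{z}$ itself reveals whether both $\mathbf{x}$ and $\mathbf{x}'$ lost an $A$ or both lost a $B$, so the type of the deletion is common to the two candidates.

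Next, I would analyze how a single deletion at position $i$ transforms the descent vector. Removing $x_i$ replaces the two descent bits $\delta(\mathbf{x})_{i-1}$ and $\delta(\mathbf{x})_i$ by a single new bit recording the order of $x_{i-1}$ and $x_{i+1}$, with the obvious simplifications when $i=1$ or $i=\alpha+\beta$. From this local rewriting rule I would express $\rho \circ \delta(\mathbf{z})$ as a function of $\rho \circ \delta(\mathbf{x})$ and $i$, yielding a congruence modulo $\alpha+\beta$ that the deletion index must satisfy. Combined with the type information above, the congruence determines $i$ uniquely, and hence $\mathbf{x} = \mathbf{x}'$, mirroring the classical Levenshtein decoding argument for VT codes.

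The main obstacle is the case analysis in the second step: the transformation of $\delta(\mathbf{x})$ under a single deletion is \emph{not} itself a single deletion on $\delta(\mathbf{x})$, so runs of identical letters and the boundary positions require careful bookkeeping, and one has to check that the congruence modulo $\alpha+\beta$ still isolates the deletion location after the merging of descent bits. Lemma~3.2 of \cite{levenshtein1992perfect} is formulated precisely for descent-weighted moment congruences of this kind, so the cleanest route is to verify that $C_{\alpha,\beta,m}$ instantiates the hypotheses of that lemma and then invoke it directly, exactly as the excerpt indicates.
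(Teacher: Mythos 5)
Your proposal is correct and follows essentially the same route as the paper, which simply observes that the claim is a direct corollary of Lemma~3.2 in Levenshtein's 1992 paper applied to the characterization $C_{\alpha,\beta,m}=\{\mathbf{x}\in C_{\alpha,\beta} : \delta(\mathbf{x})\in \mathrm{VT}_{\alpha+\beta-1,m}\}$. Your additional sketch of the underlying decoding argument (deletion type from letter counts, then locating the deletion via the moment congruence on descent vectors) correctly identifies the technical content that Levenshtein's lemma encapsulates, so deferring to that lemma, as you do, is exactly what the paper does.
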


\section{Main Contributions}
\label{Corollaries}
Our main contributions of this paper are the properties of $C_{\alpha, \beta, m}$.
Theorem \ref{theorem:Card}
and Corollary \ref{cor:maxC0} are enumerative combinatorial results
and 
Theorem \ref{theorem:binarySphere}
is a coding theoretic result.
\subsection{Cardinality of $C_{\alpha, \beta, m}$}
\begin{theorem}\label{theorem:Card}
For any $\alpha, \beta$,
\[
\# C_{\alpha, \beta, m}
=
\frac{1}{\alpha + \beta}
\sum_{ d | \langle \alpha, \beta \rangle } \binom{ \frac{\alpha + \beta}{d} }{ \frac{\beta}{d} }
 \mu( \frac{ d }{ \langle d, m \rangle } ) \frac{ \phi(d) }{ \phi( d / \langle d, m \rangle )},
\]
where $\mu$ is the m\"obius function, $\phi$ is the Euler function,
and $\langle d, m \rangle$ is the greatest common divisor of $d$ and $m$.

In particular,
\[
\# C_{\alpha, \beta, 0} =
\frac{1}{\alpha + \beta} \sum_{ d | \langle \alpha, \beta \rangle } \binom{ \frac{\alpha + \beta}{d} }{ \frac{\beta}{d} }
\phi(d).
\]
\end{theorem}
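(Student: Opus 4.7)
My plan is a roots-of-unity filter applied to the congruence
$\sum_{m=0}^{\alpha+\beta-1} \# C_{\alpha,\beta,m}\, q^m \equiv \begin{bmatrix}\alpha+\beta\\\beta\end{bmatrix}\pmod{q^{\alpha+\beta}-1}$
from equation (\ref{eq:main2}). Since the left-hand side is a polynomial in $q$ of degree strictly less than $\alpha+\beta$, the congruence becomes an honest equality after substituting any $(\alpha+\beta)$-th root of unity $\zeta$ for $q$. In other words, for every $\zeta$ with $\zeta^{\alpha+\beta}=1$,
\[
\sum_{m=0}^{\alpha+\beta-1} \# C_{\alpha,\beta,m}\, \zeta^m \;=\; \lim_{q\to\zeta}\begin{bmatrix}\alpha+\beta\\\beta\end{bmatrix},
\]
where the limit is computed because the $q$-binomial is only formally a ratio. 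This lets me treat $(\# C_{\alpha,\beta,0},\ldots,\# C_{\alpha,\beta,\alpha+\beta-1})$ as the inverse discrete Fourier transform of the values of the Gaussian binomial at the $(\alpha+\beta)$-th roots of unity.

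Next, I would invert the transform:
\[
\# C_{\alpha,\beta,m} \;=\; \frac{1}{\alpha+\beta}\sum_{\zeta^{\alpha+\beta}=1} \zeta^{-m}\lim_{q\to\zeta}\begin{bmatrix}\alpha+\beta\\\beta\end{bmatrix},
\]
and then partition the sum over $\zeta$ according to the exact multiplicative order $d$ of $\zeta$, which ranges over the divisors of $\alpha+\beta$. Each primitive $d$-th root of unity is a $d$-th primitive root in the sense of Lemma \ref{app:2}, so the limit of the $q$-binomial is $\binom{(\alpha+\beta)/d}{\beta/d}$ when $d\mid\langle\alpha,\beta\rangle$ and $0$ otherwise. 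Hence only divisors $d$ of $\langle\alpha,\beta\rangle$ contribute, and
\[
\# C_{\alpha,\beta,m} \;=\; \frac{1}{\alpha+\beta}\sum_{d\mid\langle\alpha,\beta\rangle} \binom{(\alpha+\beta)/d}{\beta/d}\, \Bigl(\sum_{\zeta\text{ primitive }d\text{-th}} \zeta^{-m}\Bigr).
\]

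The final step is to identify the inner sum as a Ramanujan sum $c_d(m)=\sum_{\zeta\text{ primitive }d\text{-th}}\zeta^{m}$ (the sign of $m$ is irrelevant since primitive roots are closed under conjugation) and invoke the classical closed form
\[
c_d(m) \;=\; \mu\!\left(\frac{d}{\langle d,m\rangle}\right)\frac{\phi(d)}{\phi\!\left(d/\langle d,m\rangle\right)}.
\]
Substituting this yields the stated formula exactly. For the special case $m=0$, one has $\langle d,0\rangle=d$, so the Ramanujan factor collapses to $\mu(1)\phi(d)/\phi(1)=\phi(d)$, giving the displayed cardinality of $C_{\alpha,\beta,0}$.

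The main obstacle I anticipate is not computational but one of justification: I need to be careful that the congruence (\ref{eq:main2}) legitimately evaluates at roots of unity where the denominator of the Gaussian binomial vanishes, which is exactly why the limit formulation of Lemma \ref{app:2} is needed rather than a naive substitution. Everything else — the DFT inversion, the partition by order, and the appeal to the standard Ramanujan-sum identity — is routine once that technical point is handled.
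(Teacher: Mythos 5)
Your proposal is correct and is essentially the paper's own argument in different clothing: the paper determines $Q(q)=\begin{bmatrix}\alpha+\beta\\ \beta\end{bmatrix}\bmod(q^{\alpha+\beta}-1)$ by its values at the $(\alpha+\beta)$-th roots of unity via Lemma \ref{app:2}, builds the interpolating polynomials $h_d(q)$ supported on the primitive $d$-th roots (which is exactly your inverse DFT partitioned by the order of $\zeta$), and evaluates the resulting character sum $\sum_{z_0\in Z_d} z_0^{m}$ by the same Ramanujan-sum identity $\mu(d/\langle d,m\rangle)\phi(d)/\phi(d/\langle d,m\rangle)$. Your concern about evaluating the Gaussian binomial at roots of unity only as a limit is precisely the role Lemma \ref{app:2} plays in the paper, so no gap remains.
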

\begin{proof}
Applying Eq. (\ref{eq:main2}), the second half of 
Fact \ref{theorem:DM}, 
we analyze the $q$-binomial 
$
Q(q):=
\begin{bmatrix}
\alpha + \beta\\
\beta
\end{bmatrix}\pmod{q^{\alpha + \beta} -1}.
$ 
Since the polynomial $Q(q)$
is of degree at most $\alpha + \beta$,
$Q(q)$ is determined by $\alpha + \beta$ different points of a complex field $\mathbb{C}$,
for example
the elements of the set $Z := \{1, \zeta, \zeta^2, \cdots, \zeta^{\alpha + \beta -1} \}$ of $\alpha + \beta$th roots of 1.

By Lemma \ref{app:2},
$Q(q)$ may be written as
\[
Q(q)=\displaystyle \sum_{d| \langle \alpha, \beta \rangle }
\binom{\frac{\alpha + \beta }{d}}{\frac{\beta}{d}}
h_d(q),
\]
where $h_t(q)$ is a polynomial such that: 
1) the degree is at most $\alpha + \beta -1$,
2) $h_t(z)=1$ for a primitive $t$th root of $1$,
3) $h_t(z)=0$
for $z \in Z$ but not a primitive $t$th root. Indeed, 
\[
h_t(q)=\dfrac{1}{\alpha + \beta} \cdot \displaystyle \sum_{z_0 \in Z_t}\dfrac{q^{\alpha + \beta}-1}{q-r_0}\cdot z_0^{-1},
\]
where $Z_t$ is the set of primitive $t$th roots of $1$.

Since
\[
\displaystyle\sum_{z_0\in Z_t}\dfrac{q^{\alpha + \beta}-1}{q-r_0}\cdot z_0^{-1}
=\displaystyle\sum_{z_0\in Z_t}\displaystyle\sum_{ 0 \leq m < \alpha + \beta}z^m_{0}q^m
\]
and
\[
\displaystyle\sum_{z_0\in Z_t} z_0^m 
= \mu \left(\dfrac{t}{ \langle t,m \rangle }\right)\dfrac{\phi(t)}{\phi (  t / \langle t,m \rangle )},
\]
we have
\begin{align*}
&Q(q)=\\
&\displaystyle\sum_{0 \leq m < \alpha + \beta}
\left\{
\dfrac{1}{\alpha + \beta}\displaystyle\sum_{d| \langle \alpha, \beta \rangle }
\binom{\frac{\alpha + \beta}{d}}
{\frac{\beta}{d}}
\mu\left(\dfrac{d}{ \langle d,m \rangle }\right)\dfrac{\phi(d)}{\phi (  d / \langle d,m \rangle )}
\right\}q^m.
\end{align*}
Hence by observing the coefficient of $q^m$,
\[
\# C_{\alpha, \beta, m}=\dfrac{1}{\alpha + \beta}\displaystyle \sum_{d | \langle \alpha, \beta \rangle}
\binom{\frac{\alpha + \beta}{d}}
{\frac{\beta}{d}}
\mu\left(\dfrac{d}{ \langle d,m \rangle }\right)\dfrac{\phi(d)}{\phi\left(\frac{d}{\langle d,m \rangle }\right)}.
\]

Setting $m := 0$ implies
$
\mu\left(\dfrac{d}{ \langle d,m \rangle }\right) = 1
$
and
$
\phi\left(\frac{d}{\langle d,m \rangle }\right) = 1.
$
Thus we obtain the formula for $\# C_{\alpha, \beta, 0}$.
\end{proof}

\begin{cor}
\label{cor:maxC0}
For any $\alpha, \beta$ and $m$,
$$
\# C_{\alpha, \beta, m} \le \# C_{\alpha, \beta, 0}.
$$
\end{cor}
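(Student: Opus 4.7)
The plan is to deduce the corollary directly from the formula established in Theorem~\ref{theorem:Card}, using the triangle inequality together with crude bounds on $|\mu|$ and $\phi$. Concretely, starting from
\[
\# C_{\alpha, \beta, m} = \frac{1}{\alpha + \beta} \sum_{d \mid \langle \alpha, \beta \rangle} \binom{\tfrac{\alpha+\beta}{d}}{\tfrac{\beta}{d}} \, \mu\!\left(\tfrac{d}{\langle d,m\rangle}\right) \frac{\phi(d)}{\phi(d/\langle d,m\rangle)},
\]
I would use the fact that $\# C_{\alpha,\beta,m}$ is a nonnegative integer (it is a cardinality), so $\# C_{\alpha,\beta,m}$ equals its own absolute value, and then apply the triangle inequality to the right-hand side. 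Since the binomial coefficients are nonnegative, this reduces the problem to comparing the arithmetic weights term by term.

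The heart of the argument is then the following pointwise estimate for each divisor $d$ of $\langle \alpha, \beta \rangle$:
\[
\left| \mu\!\left(\tfrac{d}{\langle d,m\rangle}\right) \frac{\phi(d)}{\phi(d/\langle d,m\rangle)} \right| \;\le\; \phi(d).
\]
Writing $k := d/\langle d, m\rangle$, this simplifies to $|\mu(k)| \le \phi(k)$, which is immediate: $|\mu(k)| \le 1$ for every positive integer $k$, and $\phi(k) \ge 1$. Summing this bound over divisors $d$ of $\langle \alpha,\beta\rangle$, weighted by the nonnegative binomials $\binom{(\alpha+\beta)/d}{\beta/d}$, produces exactly the formula for $\# C_{\alpha,\beta,0}$ given in the second half of Theorem~\ref{theorem:Card} (where $m=0$ makes $\langle d,m \rangle = d$, hence $\mu(d/\langle d,m\rangle) = \mu(1) = 1$ and $\phi(d/\langle d,m\rangle) = \phi(1) = 1$).

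There is essentially no obstacle: once Theorem~\ref{theorem:Card} is in hand, the corollary is a one-line estimate. If anything, the only point deserving care is the reminder that the left-hand side is a priori a nonnegative integer, which justifies replacing $\# C_{\alpha,\beta,m}$ by its absolute value before invoking the triangle inequality; without that observation, one could only conclude a bound on $|\# C_{\alpha,\beta,m}|$. A more combinatorially informative proof would be an explicit injection $C_{\alpha,\beta,m} \hookrightarrow C_{\alpha,\beta,0}$, but such a map does not seem apparent from the setup, and the analytic route via the closed-form cardinality is the most direct.
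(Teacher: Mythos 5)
Your proof is correct and takes essentially the same approach as the paper: both deduce the inequality termwise from the closed formula of Theorem~\ref{theorem:Card}, using $|\mu(k)|\le 1$ and $\phi(k)\ge 1$ to bound each arithmetic weight $\mu(d/\langle d,m\rangle)\,\phi(d)/\phi(d/\langle d,m\rangle)$ by $\phi(d)$. The only cosmetic difference is that you route the estimate through absolute values and the triangle inequality, whereas the paper bounds the signed terms directly.
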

\begin{proof}
Note that
\[
\mu( d / \langle d,a \rangle ) = -1, 0, 1 
\text{ and }
\mu( d / \langle d,0 \rangle ) = 1.
\]
This implies $\mu (  d /  \langle d,m \rangle ) \leq 1 = \mu (  d / \langle d,0 \rangle )$.
Similarly 
$\dfrac{\phi(d)}{\phi ( d / \langle d,m \rangle )} \leq 
\phi(d) = 
 \dfrac{\phi(d)}{\phi ( d / \langle d, 0 \rangle) )}$ holds.
Therefore
\[
\# C_{ \alpha, \beta, m} \le \dfrac{1}{\alpha + \beta}\displaystyle \sum_{d | \langle \alpha, \beta \rangle}
\binom{\frac{\alpha + \beta}{d}}
{\frac{\beta}{d}}
\phi(d)
=\# C_{\alpha, \beta, 0}.
\]
\end{proof}

\subsection{Deletion Sphere in the Case $\alpha = \beta$}\label{subsec:deletionSphere}
In this subsection we briefly discuss a curious phenomenon 
relating the cardinality of $C_{\alpha,\beta, m}$ and deletion spheres 
when $\alpha = \beta$. 

As is mentioned in Fact \ref{fact:sphere_run},
the cardinality of $\mathrm{dS}( C )$ for a singleton
depends on its element.
However, we have the following: 

\begin{theorem}\label{theorem:dS_a2_beta}\label{theorem:binarySphere}\label{fact:permutationSphere}
Assume $\alpha = \beta$, and set $\gamma := \alpha (= \beta)$. Then 
\[
\# \mathrm{dS} ( C_{\gamma, \gamma, m} )
= (\gamma + 1) \# C_{\gamma, \gamma, m}.\]
\end{theorem}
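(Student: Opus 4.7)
The plan is to reduce the claim to an identity about run numbers via the single-deletion-correcting property (Theorem~\ref{theorem:single_deletion_error_correction}) and the singleton-sphere formula (Fact~\ref{fact:sphere_run}). Since the spheres $\mathrm{dS}(\{\mathbf{x}\})$ for $\mathbf{x} \in C_{\gamma,\gamma,m}$ are pairwise disjoint and each has cardinality $||\mathbf{x}||$, we get
\[
\# \mathrm{dS}(C_{\gamma,\gamma,m}) = \sum_{\mathbf{x} \in C_{\gamma,\gamma,m}} ||\mathbf{x}||,
\]
so the theorem is equivalent to asserting that the average run number on each class $C_{\gamma,\gamma,m}$ equals $\gamma+1$ --- the same value as the unconditional average over all of $C_{\gamma,\gamma}$, as a one-line linearity-of-expectation calculation confirms.

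The key combinatorial ingredient I would isolate is an insertion lemma: for every length-$(2\gamma-1)$ binary vector $\mathbf{z}$ of composition $(\gamma-1,\gamma)$ or $(\gamma,\gamma-1)$, exactly $\gamma+1$ elements of $C_{\gamma,\gamma}$ have $\mathbf{z}$ as a single deletion. I would prove this by a direct run-based count: two insertion positions of the missing letter yield the same string iff all intermediate entries of $\mathbf{z}$ equal that letter, so the number of distinct insertions equals the number of equivalence classes of positions, which collapses (via a short case analysis on whether $\mathbf{z}$ starts and ends with $A$ or $B$) to $R_A + (\gamma_B - R_B) + \mathrm{bdry} = \gamma_B + 1 = \gamma + 1$ for $A$-insertions; the $B$-insertion case is symmetric. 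Combined with single-deletion-correcting, the $\gamma+1$ preimages of any such $\mathbf{z}$ must lie in pairwise distinct classes $C_{\gamma,\gamma,m_1}, \ldots, C_{\gamma,\gamma,m_{\gamma+1}}$, immediately yielding the aggregate identity $\sum_m \#\mathrm{dS}(C_{\gamma,\gamma,m}) = (\gamma+1)\binom{2\gamma}{\gamma}$ by interchanging summation over $(\mathbf{x},\mathbf{z})$.

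To upgrade this aggregate identity to the per-$m$ statement, I would pass to the generating function reformulation
\[
\sum_{\mathbf{x} \in C_{\gamma,\gamma}} ||\mathbf{x}||\, q^{\rho \circ \delta(\mathbf{x})} \equiv (\gamma+1) \begin{bmatrix} 2\gamma \\ \gamma \end{bmatrix} \pmod{q^{2\gamma}-1},
\]
which, by the coefficient-extraction argument already used for Eq.~(\ref{eq:main2}), is equivalent to the theorem. I would verify this congruence by evaluating both sides at each $(2\gamma)$-th root of unity $\zeta$ and invoking Lemma~\ref{app:2} on the right-hand side. The hard part --- and the main obstacle I anticipate --- is computing the left-hand side at a primitive $d$-th root $\zeta$ with $d \mid 2\gamma$: the cyclic shift action on $C_{\gamma,\gamma}$ that underlies the cyclic sieving interpretation of the $q$-binomial does not act compatibly with $||\mathbf{x}||$, because the linear run number differs from the cyclic run number by the boundary indicator of whether $x_1 = x_{2\gamma}$. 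The character sum will therefore have to be split into a shift-invariant cyclic-run part and a boundary correction, with the fixed-point counts matched case by case against the value predicted by Lemma~\ref{app:2} in the regimes $d \mid \gamma$ and $d \nmid \gamma$.
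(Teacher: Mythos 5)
Your reduction is sound as far as it goes: disjointness of the singleton spheres (Theorem~\ref{theorem:single_deletion_error_correction}) together with Fact~\ref{fact:sphere_run} gives $\# \mathrm{dS}(C_{\gamma,\gamma,m}) = \sum_{\mathbf{x}\in C_{\gamma,\gamma,m}} ||\mathbf{x}||$, your insertion lemma is correct (inserting an $A$ into a word of length $2\gamma-1$ containing $\gamma-1$ letters $A$ produces exactly $2\gamma-(\gamma-1)=\gamma+1$ distinct words, and symmetrically for $B$), and so the aggregate identity $\sum_m \#\mathrm{dS}(C_{\gamma,\gamma,m}) = (\gamma+1)\binom{2\gamma}{\gamma}$ holds. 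You also correctly identify that the theorem is equivalent to the single congruence
\[
\sum_{\mathbf{x}\in C_{\gamma,\gamma}} ||\mathbf{x}||\, q^{\rho\circ\delta(\mathbf{x})} \equiv (\gamma+1)\begin{bmatrix}2\gamma\\ \gamma\end{bmatrix} \pmod{q^{2\gamma}-1},
\]
since each residue class contributes only to the coefficient of $q^m$ on each side. But this congruence is exactly where your argument stops: the aggregate count is only its evaluation at $q=1$ and says nothing about how run numbers distribute among the $2\gamma$ classes, and the root-of-unity evaluation you propose for the full congruence is not carried out --- you yourself flag the character sum for the run-weighted statistic as ``the hard part'' and ``the main obstacle.'' Since the per-$m$ refinement is the entire content of the theorem (that the unconditional average run number is $\gamma+1$ is elementary), this is a genuine gap rather than a missing detail.

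For comparison, the paper proves the congruence without any root-of-unity analysis. It refines the descent moment distribution by run number, setting $R_r(q) := \mathrm{DM}(\{\mathbf{x}\in C_{\gamma,\gamma} : ||\mathbf{x}||=r\})$, so that the left-hand side becomes $\sum_r r\,R_r(q)$ and the right-hand side $(\gamma+1)\sum_r R_r(q)$ by Lemma~\ref{lemma:DM,R_r}. A lattice-path bijection yields a closed form for $R_r(q)$ as a power of $q$ times a product of two $q$-binomials (Lemma~\ref{lemma:R_r}), from which the symmetry $R_r(q) \equiv R_{2\gamma+2-r}(q) \pmod{q^{2\gamma}-1}$ follows (Lemma~\ref{app:1}; this is precisely where $\alpha=\beta$ is used). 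Averaging the coefficient $r$ with $2\gamma+2-r$ then converts $\sum_r r\,R_r$ into $(\gamma+1)\sum_r R_r$. To complete your route you would need to supply something playing the role of this symmetry; the explicit formula for $R_r(q)$ is the ingredient your sketch is missing.
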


\begin{example}[Case $\alpha = \beta = 2$]
As we have seen in Example \ref{example:alpha2M2qbinom},
$C_{2,2,0} = \{ AABB, BABA \}$.
Hence $\mathrm{dS}( C_{2,2,0} ) = || AABB || + || BABA || = 2 + 4 = 3 \# C_{2,2,0}$.
Similarly, $\mathrm{dS}( C_{2,2,1} ) = || BAAB || = 3 = 3 \# C_{2,2,1}$.
\end{example}

\begin{definition}[$R_r(q)$]
For an integer $r$,
let us define
\begin{align*}
R_r (q) &:= \displaystyle \sum_{\substack{\mathbf{x}\in C_{\alpha, \beta} \\ || \mathbf{x} || =r}}q^{\rho \circ \delta(\mathbf{x}) }\\
&= \mathrm{DM}( \{ \mathbf{x} \in C_{\alpha, \beta} \mid || \mathbf{x} || = r \} ),
\end{align*}
where $|| \mathbf{x} ||$ is the run number of $\mathbf{x}$.
\end{definition}

\begin{lemma}\label{lemma:DM,R_r}
\begin{align}
\mathrm{DM}( C_{\alpha, \beta}) &= \sum_{2 \le r \le \alpha + \beta} R_r (q),
\label{eq:DM1}
\end{align}
and
\begin{align}
\sum_{0 \le m < \alpha + \beta} \# \mathrm{dS}( C_{\alpha, \beta, m} ) q^m &\equiv
\sum_{2 \le r \le \alpha + \beta} r R_r (q) \pmod{ q^{\alpha + \beta} - 1}.
\label{eq:DM2}
\end{align}
\end{lemma}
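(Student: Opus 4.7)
The plan is to reduce both equations to two facts established earlier: the additivity of $\mathrm{DM}$ over disjoint unions (Lemma \ref{DM_SUM}), and the identity $\#\mathrm{dS}(\{\mathbf{x}\}) = \|\mathbf{x}\|$ together with the fact that $C_{\alpha,\beta,m}$ is a single deletion correcting code (Fact \ref{fact:sphere_run} and Theorem \ref{theorem:single_deletion_error_correction}).

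For Eq. (\ref{eq:DM1}), I would first observe that the run number of any $\mathbf{x}\in C_{\alpha,\beta}$ satisfies $2\le \|\mathbf{x}\|\le \alpha+\beta$ (since both letters occur, and since length is $\alpha+\beta$). Thus
\[
C_{\alpha,\beta}\;=\;\bigsqcup_{r=2}^{\alpha+\beta}\{\mathbf{x}\in C_{\alpha,\beta} : \|\mathbf{x}\|=r\},
\]
and iterating Lemma \ref{DM_SUM} over this partition yields $\mathrm{DM}(C_{\alpha,\beta}) = \sum_{r=2}^{\alpha+\beta} R_r(q)$ directly from the definition of $R_r(q)$.

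For Eq. (\ref{eq:DM2}), the plan is to first unpack $\#\mathrm{dS}(C_{\alpha,\beta,m})$ using the fact that $C_{\alpha,\beta,m}$ is a single deletion correcting code. Combining this with Fact \ref{fact:sphere_run} gives
\[
\#\mathrm{dS}(C_{\alpha,\beta,m}) \;=\; \sum_{\mathbf{x}\in C_{\alpha,\beta,m}} \|\mathbf{x}\|.
\]
Next, I would use that $\mathbf{x}\in C_{\alpha,\beta,m}$ means $\rho\circ\delta(\mathbf{x})\equiv m \pmod{\alpha+\beta}$, so $q^m\equiv q^{\rho\circ\delta(\mathbf{x})}\pmod{q^{\alpha+\beta}-1}$. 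Summing the displayed identity against $q^m$ and swapping the order of summation therefore gives, modulo $q^{\alpha+\beta}-1$,
\[
\sum_{m=0}^{\alpha+\beta-1}\#\mathrm{dS}(C_{\alpha,\beta,m})\,q^m
\;\equiv\;\sum_{\mathbf{x}\in C_{\alpha,\beta}} \|\mathbf{x}\|\,q^{\rho\circ\delta(\mathbf{x})}.
\]
Finally, grouping the right-hand sum by run number and recognizing the inner sum as $R_r(q)$ produces $\sum_{r=2}^{\alpha+\beta} r R_r(q)$, as required.

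There is really no technical obstacle here; the only thing to be careful about is the modular reduction step, where one must verify that distinct $\mathbf{x}\in C_{\alpha,\beta}$ sharing the same residue class of $\rho\circ\delta$ lie in the same $C_{\alpha,\beta,m}$, so that the double sum reassembles correctly. Once this bookkeeping is done, both equations follow from definitions plus the previously cited results.
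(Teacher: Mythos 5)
Your proposal is correct and follows essentially the same route as the paper: Eq.~(\ref{eq:DM1}) by partitioning $C_{\alpha,\beta}$ according to run number, and Eq.~(\ref{eq:DM2}) by writing $\#\mathrm{dS}(C_{\alpha,\beta,m})=\sum_{\mathbf{x}\in C_{\alpha,\beta,m}}\|\mathbf{x}\|$, replacing $q^m$ by $q^{\rho\circ\delta(\mathbf{x})}$ modulo $q^{\alpha+\beta}-1$, and regrouping by run number. You are in fact slightly more explicit than the paper in citing Theorem~\ref{theorem:single_deletion_error_correction} and Fact~\ref{fact:sphere_run} for the first step, which the paper leaves implicit.
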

\begin{proof}
Since the run number of an element of $C_{\alpha, \beta}$ is greater than or equal to $2$ and
is at most $\alpha + \beta$,
Eq. (\ref{eq:DM1}) holds.

\begin{align*}
\sum_{0 \le m < \alpha + \beta} \# \mathrm{dS}( C_{\alpha, \beta, m} ) q^m
&= \sum_{0 \le m < \alpha + \beta} \big( \sum_{ \mathbf{x} \in C_{\alpha, \beta, m} } || \mathbf{x} || q^m \big) \\
&\equiv \sum_{0 \le m < \alpha + \beta} ( \sum_{ \mathbf{x} \in C_{\alpha, \beta, m} } || \mathbf{x} || q^{\rho \circ \delta( \mathbf{x} )} )\\
&= \sum_{ \mathbf{x} \in C_{\alpha, \beta}} || \mathbf{x} || q^{\rho \circ \delta( \mathbf{x} )} \\
&=\sum_{2 \le r < \alpha + \beta}( \sum_{ \mathbf{x} \in C_{\alpha, \beta }, || x || = r } r q^{\rho \circ \delta( \mathbf{x} )} )\\
&=\sum_{2 \le r < \alpha + \beta}r ( \sum_{ \mathbf{x} \in C_{\alpha, \beta }, || x || = r } q^{\rho \circ \delta( \mathbf{x} )} )\\
&=
\sum_{2 \le r \le \alpha + \beta} r R_r (q) \pmod{ q^{\alpha + \beta} - 1}.
\end{align*}
Hence Eq. (\ref{eq:DM2}) holds.
\end{proof}

These two relations above will be used for the proof of
Theorem \ref{theorem:binarySphere}.
As preparation, we show the following:

\begin{lemma}
\label{lemma:R_r}
\[
R_r(q) = 
q^{\lfloor \frac{(r-1)^2}{4}\rfloor}
\begin{bmatrix}
\alpha-1\\
\lfloor \frac{r-2}{2}\rfloor
\end{bmatrix}
\begin{bmatrix}
\beta-1\\
\lfloor \frac{r-1}{2}\rfloor
\end{bmatrix}
+
q^{\beta}q^{\lfloor \frac{(r-1)^2}{4}\rfloor}
\begin{bmatrix}
\alpha -1  \\ \lfloor \frac{r-1}{2}\rfloor
\end{bmatrix}
\begin{bmatrix}
\beta -1 \\ \lfloor \frac{r-2}{2} \rfloor
\end{bmatrix}
,
\]
where $\lfloor \epsilon \rfloor$ denotes the maximal integer that does not exceed $\epsilon$.
\end{lemma}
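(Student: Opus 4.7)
The plan is to decompose $R_r(q)$ according to the first letter of $\mathbf{x}$, writing $R_r(q) = R_r^A(q) + R_r^B(q)$, and to compute each piece by parametrizing words with $r$ runs via their run-length sequences. A word $\mathbf{x} \in C_{\alpha, \beta}$ with $\|\mathbf{x}\| = r$ starting with $A$ is uniquely of the form $A^{a_1} B^{b_1} A^{a_2} B^{b_2} \cdots$ with $a_i, b_i \geq 1$, and the numbers $n_A, n_B$ of $A$-runs and $B$-runs are determined by $r$ (namely $\lceil r/2 \rceil$ and $\lfloor r/2 \rfloor$). Since $A < B$, descents $x_i > x_{i+1}$ occur exactly at the ends of those $B$-runs that are followed by an $A$-run; consequently $\rho \circ \delta(\mathbf{x})$ is an explicit linear function of the $a_i$ and $b_i$ whose coefficients depend only on the layout of runs.

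Next, the substitution $a_i' = a_i - 1,\ b_i' = b_i - 1$ replaces the positive compositions of $\alpha$ and $\beta$ by weak compositions with $\sum a_i' = \alpha - n_A$ and $\sum b_i' = \beta - n_B$. This produces a constant shift $C$ in the exponent together with independent linear forms in the $a_i'$ and $b_i'$, so that
\[
R_r^{\text{start}}(q) \;=\; q^{C} \cdot \Bigl(\sum_{a'} q^{\sum u_i a_i'}\Bigr) \Bigl(\sum_{b'} q^{\sum v_i b_i'}\Bigr),
\]
with the two inner sums independent. After a reversal of indices, each inner sum matches the standard identity
\[
\sum_{\substack{c_1, \ldots, c_n \geq 0 \\ c_1 + \cdots + c_n = N}} q^{\,0 \cdot c_1 + 1 \cdot c_2 + \cdots + (n-1) c_n}
\;=\;
\begin{bmatrix} N + n - 1 \\ n - 1 \end{bmatrix},
\]
yielding a product of the form $\begin{bmatrix} \alpha - 1 \\ n_A - 1 \end{bmatrix}\begin{bmatrix} \beta - 1 \\ n_B - 1 \end{bmatrix}$ times a power of $q$.

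A short case analysis on the parity of $r$ then matches these two pieces to the two summands of the lemma. For $r = 2k$, the $A$-start gives the first summand with prefactor $q^{k(k-1)}$, and the $B$-start gives the second; the two differ by exactly a factor $q^\beta$, traceable to the shift in descent positions when $B$ leads. For $r = 2k+1$, the roles of $A$-start and $B$-start are swapped, while the constant $C$ collapses to $k^2$. In both parities $C = \lfloor (r-1)^2/4 \rfloor$ and the $q$-binomial indices reduce to $\lfloor(r-2)/2\rfloor$ and $\lfloor(r-1)/2\rfloor$ in the correct positions.

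The main obstacle is not any single calculation but the parity bookkeeping: one must verify that the constant $C$ accumulated from the $a_i', b_i'$ substitution collapses uniformly to $\lfloor (r-1)^2/4 \rfloor$, that the factor $q^\beta$ appears in exactly the designated summand, and that the upper and lower indices of the two $q$-binomials line up with $\lfloor(r-2)/2\rfloor$ and $\lfloor(r-1)/2\rfloor$ for each parity and starting letter. Once this matching is handled, summing $R_r^A(q) + R_r^B(q)$ yields the stated expression for $R_r(q)$.
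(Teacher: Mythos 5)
Your proposal is correct and follows essentially the same route as the paper: both decompose words with $r$ runs by their starting letter, write the major index as a linear form in the run lengths, and evaluate the resulting sum over positive compositions by shifting to weak compositions and invoking a standard $q$-binomial identity (the paper phrases this identity as the area statistic of lattice paths from $(0,0)$ to $(\alpha - r/2,\, r/2-1)$, which is equivalent to your weak-composition sum $\sum q^{0\cdot c_1 + \cdots + (n-1)c_n} = \bigl[\begin{smallmatrix} N+n-1 \\ n-1 \end{smallmatrix}\bigr]$). The parity bookkeeping you flag is exactly what the paper also elides by treating only even $r$ in detail, so no substantive gap remains.
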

\begin{proof}
For the sake of brevity, we only show the case when $r$ is even. The odd case is similarly proven.
Any element of $C_{\alpha,\beta}$ with $r$ runs, where $r$ is even, has one of the following two forms:
\[
\overbrace{AA \dots A}^{a_1}
\overbrace{BB \dots B}^{b_1}
\overbrace{AA \dots A}^{a_2}
\overbrace{BB \dots B}^{b_2}
\dots \dots
\overbrace{AA \dots A}^{a_{r/2}}
\overbrace{BB \dots B}^{b_{r/2}},
 \eqno(*) 
\] 
\[
\overbrace{BB \dots B}^{b_1}
\overbrace{AA \dots A}^{a_1}
\overbrace{BB \dots B}^{b_2}
\overbrace{AA \dots A}^{a_2}
\dots \dots
\overbrace{BB \dots B}^{b_{r/2}},
\overbrace{AA \dots A}^{a_{r/2}},
 \eqno(**) 
\]
where $a_j$ ($b_j$) denotes the length of the $j$th run with entry A (B), and 
$1 \le j \le r/2$. 
For case (*), the descent moment is
\[
(a_1+b_1)+(a_1+a_2+b_1+b_2)+\cdots +(a_1+\cdots +a_{\frac{r}{2}-1}+b_1+\cdots+b_{\frac{r}{2}-1}),
\]
and for case ($\ast\ast$), the descent moment is 
\[
(b_1)+(a_1+b_1+b_2)+\cdots +(a_1+\cdots +a_{\frac{r}{2}-1}+b_1+\cdots+b_{\frac{r}{2}}).
\]
Hence
\begin{align*}
R_r(q)
&=\displaystyle \sum_{\substack{1\leq a_1,a_2,\cdots,a_{\frac{r}{2}}\\a_1+\cdots a_{\frac{r}{2}}=\alpha}}
\displaystyle\sum_{\substack{1\leq b_1,b_2,\cdots ,b_{\frac{r}{2}}\\b_1+\cdots +b_{\frac{r}{2}}=\beta}}
(
q^{\mathcal{A}^{*}}
q^{\mathcal{B}^{*}}
+
q^{\mathcal{A}^{**}}
q^{\mathcal{B}^{**}}
) \\
&=
\displaystyle\sum
q^{\mathcal{A}^{*}}
\displaystyle\sum
q^{\mathcal{B}^{*}}
+
\displaystyle\sum
q^{\mathcal{A}^{**}}
\displaystyle\sum
q^{\mathcal{B}^{**}},
\end{align*}
where
$\mathcal{A}^{*} = a_1 + (a_1 + a_2) + \dots + (a_1 + a_2 + \dots + a_{r/2 - 1})$, 
$\mathcal{A}^{**} = \mathcal{A}^{*}$, \\
$\mathcal{B}^{*} = b_1 + (b_1 + b_2) + \dots + (b_1 + b_2 + \dots + b_{r/2 - 1})$,
and
$\mathcal{B}^{**} = b_1 + (b_1 + b_2) + \dots + (b_1 + b_2 + \dots + b_{r/2})$.

For calculating $\sum q^{\mathcal{A}^{*}}$,
let us
define a bijection,
depicted by Figure \ref{figure:shiftedPath},
from the set of sequences $a_1, a_2, \dots, a_{r/2} \ge 1$
to the set of lattice paths $p$ from $(0,0)$ to $(\alpha - r/2, r/2 - 1)$.
Note that $\mathcal{A}^* = S(p) + (1 + 2 + \dots + (r/2 -1) ) = (r/2) (r/2 - 1)/ 2 + S(p)$.

\begin{figure}[htbp]
\includegraphics[width=11cm,bb=0 0 383 267]{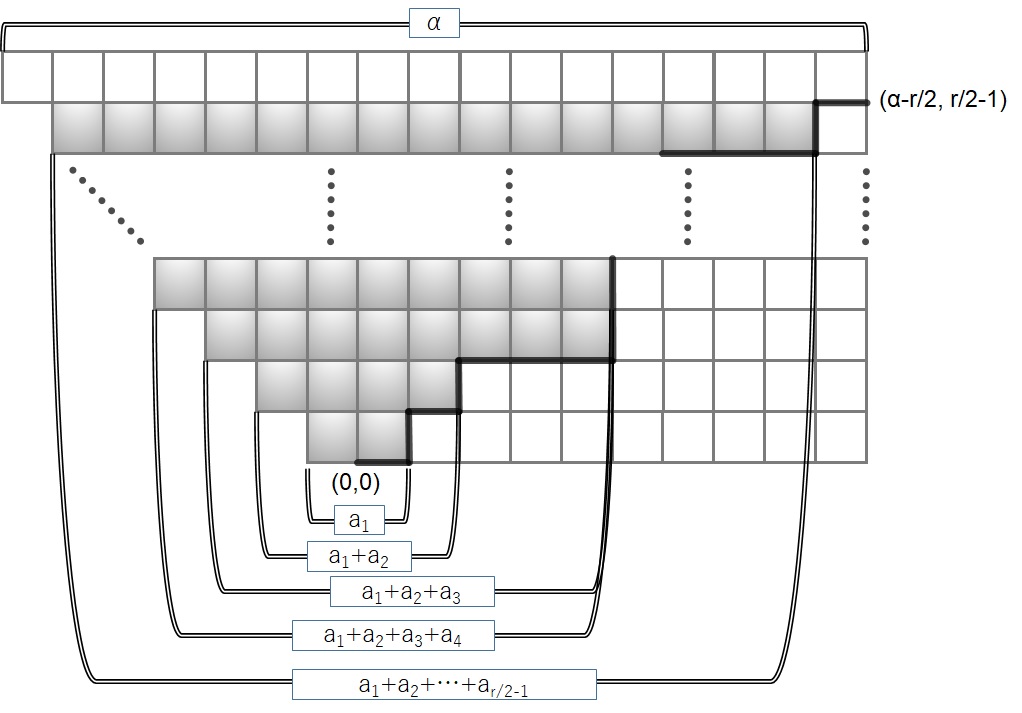} 
\caption{bijection between the set of lattice paths from $(0,0)$ to $(\alpha-r/2,r/2 -1)$
and the set of sequences $a_1, a_2, \dots, a_{r/2} \ge 1$}
\label{figure:shiftedPath}
\end{figure}

Therefore by Eq. (\ref{known fact}),
\begin{align*}
\displaystyle \sum
q^{\mathcal{A}^* }
=
q^{\frac{(r/2)(r/2 - 1)}{2}}
\begin{bmatrix}
\alpha-1\\
r/2-1
\end{bmatrix}
=
q^{\frac{(r/2)(r/2 - 1)}{2}}
\begin{bmatrix}
\alpha-1\\
\lfloor \frac{r-2}{2}\rfloor
\end{bmatrix}.
\end{align*}
Similarly
\begin{align*}
\displaystyle \sum
q^{\mathcal{B}^* }
\; = \;
q^{(r/2)(r/2 - 1)/2}
\begin{bmatrix}
\beta-1\\
r/2-1
\end{bmatrix}
\;=\;
q^{\frac{(r/2)(r/2 - 1)}{2}}
\begin{bmatrix}
\beta-1\\
\lfloor \frac{r-1}{2}\rfloor
\end{bmatrix}.
\end{align*}
Note that in the previous two equations we have used 
both $\lfloor (r-2)/2\rfloor$ and $\lfloor (r-1)/2\rfloor$ to 
represent $r/2-1$, which is permissible since $r$ is even. 
The choices were made so that the end result is consistent 
with the case when $r$ is odd. 
The previous two equations imply that

\begin{align*}
\displaystyle \sum
q^{\mathcal{A}^* }
\displaystyle \sum
q^{\mathcal{B}^* }
&\; = \;
q^{(r/2)(r/2 - 1)}
\begin{bmatrix}
\alpha-1\\
\lfloor \frac{r-2}{2}\rfloor
\end{bmatrix}
\begin{bmatrix}
\beta-1\\
\lfloor \frac{r-1}{2}\rfloor
\end{bmatrix} \\
&\; = \;
q^{ \lfloor \frac{(r-1)^2}{4} \rfloor }
\begin{bmatrix}
\alpha-1\\
\lfloor \frac{r-2}{2}\rfloor
\end{bmatrix}
\begin{bmatrix}
\beta-1\\
\lfloor \frac{r-1}{2}\rfloor
\end{bmatrix}.
\end{align*}

By a similar argument, we can show
\[
\displaystyle \sum
q^{\mathcal{A}^{**} }
\displaystyle \sum
q^{\mathcal{B}^{**} }\\
=
q^{\beta} q^{ \lfloor \frac{(r-1)^2}{4} \rfloor }
\begin{bmatrix}
\alpha-1\\
\lfloor \frac{r-1}{2}\rfloor
\end{bmatrix}
\begin{bmatrix}
\beta-1\\
\lfloor \frac{r-2}{2}\rfloor
\end{bmatrix}.
\]
Hence for even $r$,
\begin{align*}
R_r (q)
&=
q^{ \lfloor \frac{(r-1)^2}{4} \rfloor }
\begin{bmatrix}
\alpha-1\\
\lfloor \frac{r-2}{2}\rfloor
\end{bmatrix}
\begin{bmatrix}
\beta-1\\
\lfloor \frac{r-1}{2}\rfloor
\end{bmatrix}
 +
q^{\beta} q^{ \lfloor \frac{(r-1)^2}{4} \rfloor }
\begin{bmatrix}
\alpha-1\\
\lfloor \frac{r-1}{2}\rfloor
\end{bmatrix}
\begin{bmatrix}
\beta-1\\
\lfloor \frac{r-2}{2}\rfloor
\end{bmatrix}.
\end{align*}
As mentioned at the beginning of the proof, the case 
when $r$ is odd is similarly proven. 
\end{proof}

The following is the key lemma to prove 
Theorem \ref{theorem:dS_a2_beta}.
It states that
a sort of symmetry of $R_r (q)$ on $r$ holds
by
the assumption $\alpha = \beta$ and 
considering $\pmod{q^{2\gamma} - 1}$.

\begin{lemma}\label{app:1}
For $2 \le r \le 2\gamma$,  
$$
R_{r}(q)\equiv R_{2\gamma + 2 - r}(q) \pmod{q^{2 \gamma }-1}.
$$
\end{lemma}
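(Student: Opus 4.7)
The plan is to combine Lemma \ref{lemma:R_r} with the elementary symmetry $\begin{bmatrix} n \\ k \end{bmatrix} = \begin{bmatrix} n \\ n-k \end{bmatrix}$ of $q$-binomials, and then to trivialize the leftover power of $q$ using the factor $1+q^\gamma$ that $R_r(q)$ carries when $\alpha = \beta$.

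First I would specialize Lemma \ref{lemma:R_r} to $\alpha=\beta=\gamma$. The two summands share the same pair of $q$-binomials (in opposite order), so they collapse to
$$R_r(q) = (1+q^\gamma)\, q^{\lfloor (r-1)^2/4 \rfloor} \begin{bmatrix} \gamma-1 \\ \lfloor (r-2)/2 \rfloor \end{bmatrix} \begin{bmatrix} \gamma-1 \\ \lfloor (r-1)/2 \rfloor \end{bmatrix}.$$
The visible divisibility $1+q^\gamma \mid R_r(q)$ is the key structural fact for the modular reduction later.

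Next, writing $r' := 2\gamma + 2 - r$, I would track how each ingredient behaves under $r \mapsto r'$. A short parity check (handling $r$ even and $r$ odd separately) yields
$$\lfloor (r'-2)/2 \rfloor = (\gamma-1) - \lfloor (r-1)/2 \rfloor, \qquad \lfloor (r'-1)/2 \rfloor = (\gamma-1) - \lfloor (r-2)/2 \rfloor,$$
so the symmetry of $q$-binomials makes the product invariant under $r \mapsto r'$. For the power of $q$, the identity $(r'-1)^2 - (r-1)^2 = 4\gamma(\gamma+1-r)$ together with the observation that $r'-1$ and $r-1$ have the same parity (their sum is $2\gamma$) shows $\lfloor (r'-1)^2/4 \rfloor - \lfloor (r-1)^2/4 \rfloor = \gamma(\gamma+1-r)$. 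Assembling these ingredients gives the clean relation
$$R_{r'}(q) = q^{\gamma(\gamma+1-r)}\, R_r(q).$$

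Finally I would reduce modulo $q^{2\gamma}-1 = (q^\gamma-1)(q^\gamma+1)$. It remains to verify $(q^{\gamma(\gamma+1-r)} - 1)\, R_r(q) \equiv 0 \pmod{(q^\gamma-1)(q^\gamma+1)}$. Since $\gamma \mid \gamma(\gamma+1-r)$, the factor $q^\gamma - 1$ divides $q^{\gamma(\gamma+1-r)} - 1$, and the first step already gave $q^\gamma + 1 \mid R_r(q)$; multiplying these coprime divisibilities yields divisibility by $q^{2\gamma}-1$, as required. The main obstacle I anticipate is the floor-function bookkeeping in the second step: verifying simultaneously that the two floors interchange under $r \mapsto r'$ regardless of the parity of $r$, and that the change in $\lfloor (r-1)^2/4 \rfloor$ is exactly $\gamma(\gamma+1-r)$ with no parity-dependent correction. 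Once those identities are pinned down, the final modular reduction is essentially one line.
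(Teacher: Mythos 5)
Your proposal is correct and follows essentially the same route as the paper: specialize Lemma \ref{lemma:R_r} to $\alpha=\beta=\gamma$, use the $q$-binomial symmetry and the floor identities to get $R_{2\gamma+2-r}(q)=q^{\gamma(\gamma+1-r)}R_r(q)$, and kill the stray power of $q$ via the factor $1+q^\gamma$ (the paper phrases this last step as $q^{\gamma k}+q^{\gamma(k+1)}\equiv 1+q^\gamma$ using the parity of consecutive integers, which is the same observation as your $(q^\gamma-1)(q^\gamma+1)$ factorization). The only cosmetic point is that for $r>\gamma+1$ the exponent $\gamma(\gamma+1-r)$ is negative, so one should either assume $r\le\gamma+1$ without loss of generality (the claim is symmetric in $r\leftrightarrow 2\gamma+2-r$) or write the relation in the form $q^{\gamma(r-\gamma-1)}R_{2\gamma+2-r}(q)=R_r(q)$.
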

\begin{proof}
By Lemma \ref{lemma:R_r},
\begin{align*}
R_{2 \gamma + 2 - r}(q)
&=(1+q^\gamma)
q^{\lfloor \frac{(2 \gamma + 2 - r-1)^2}{4}\rfloor}
\begin{bmatrix}
\gamma-1\\
\lfloor \frac{2 \gamma + 2 - r-1}{2}\rfloor
\end{bmatrix}
\begin{bmatrix}
\gamma-1\\
\lfloor \frac{2 \gamma + 2 - r-2}{2}\rfloor
\end{bmatrix}\\
&=
(1+q^\gamma)
q^{\gamma^2 - \gamma(r-1) + \lfloor \frac{(r-1)^2}{4}\rfloor}
\begin{bmatrix}
\gamma-1\\
\gamma-1 - \lfloor \frac{r-1}{2}\rfloor
\end{bmatrix}
\begin{bmatrix}
\gamma-1\\
\gamma-1
-
\lfloor \frac{r-2}{2}\rfloor
\end{bmatrix}\\
&=
(q^{\gamma (\gamma-r+1)} + q^{\gamma (\gamma-r+2)})
q^{\lfloor \frac{(r-1)^2}{4}\rfloor}
\begin{bmatrix}
\gamma-1\\
\lfloor \frac{r-1}{2}\rfloor
\end{bmatrix}
\begin{bmatrix}
\gamma-1\\
\lfloor \frac{r-2}{2}\rfloor
\end{bmatrix}.
\end{align*}

$q^{\gamma (\gamma-r+1)} + q^{\gamma (\gamma-r+2)} \equiv 1 + q^\gamma \pmod{q^{2\gamma} - 1}$ holds
from
$\{ \gamma-r+1, \gamma-r+2 \} \equiv \{ 0, 1 \} \pmod{2}$.
\end{proof}

\begin{proof}[Proof of Theorem \ref{theorem:dS_a2_beta}]
Define $G_m (q)$ as
$$
G_m (q) :=\displaystyle \sum_{\mathbf{x}\in C_{\gamma, \gamma, m}} || \mathbf{x}|| q^{\rho \circ \delta(\mathbf{x})},
\text{ for }0 \le m < 2 \gamma.$$
Then 
\begin{align}
\#{\rm dS}(C_{\gamma, \gamma, m})\cdot q^m
&=
 \displaystyle \sum_{\mathbf{x}\in C_{\gamma, \gamma, m}} || \mathbf{x}|| q^{m} \nonumber \\
&\equiv
G_m(q) \pmod{q^{2 \gamma} - 1}. \label{eq:lastEq}
\end{align}

Hence the proof is done by showing 
$$
G_m (q) \equiv (\gamma+1) \mathrm{DM}( C_{\gamma, \gamma, m}) \pmod{q^{2 \gamma}-1}.
$$

To this end, it is enough to show
\[
\sum_{0 \leq m < 2 \gamma} G_m (q)
\equiv (\gamma + 1){\rm DM}( C_{\gamma, \gamma}) \pmod{q^{2 \gamma}-1}.
\]

Finally we have
\begin{align*}
&\sum_{0 \leq m < 2 \gamma} G_m (q)\\
&\equiv \sum_{0 \leq m < 2 \gamma} 
   \#{\rm dS}(C_{\gamma, \gamma, m})\cdot q^m & (\text{by Eq. (\ref{eq:lastEq})})\\
&=
   \displaystyle \sum_{2 \leq r \leq 2 \gamma} r R_r(q)
   & (\text{by Eq. (\ref{eq:DM2})})\\
&=
\frac{1}{2}
\displaystyle \sum_{2 \leq r \leq 2 \gamma} r R_r(q)
         + 
\frac{1}{2}         
         \displaystyle \sum_{2 \leq r \leq 2 \gamma} r R_r(q)\\
&\equiv
\frac{1}{2} \displaystyle \sum_{2 \leq r \leq 2 \gamma} r R_r (q)
+ \frac{1}{2} \displaystyle \sum_{2 \leq r \leq 2 \gamma} r R_{2 \gamma + 2 - r} (q)
 & (\text{by Lemma \ref{app:1}})\\
&=
\frac{1}{2} \displaystyle \sum_{2 \leq r \leq 2 \gamma} r R_r(q)
+ \frac{1}{2} \displaystyle \sum_{2 \leq r \leq 2 \gamma} (2 \gamma + 2-r) R_r(q) \\
&=
\frac{1}{2} \sum_{2 \leq r \leq 2 \gamma} (2\gamma +2) R_r(q)\\
&=
(\gamma+1) \sum_{2 \leq r \leq 2 \gamma} R_r(q)\\
&=
(\gamma+1) \mathrm{DM}( C_{\gamma, \gamma} ) \pmod{ q^{2 \gamma} - 1 }.
 & (\text{by Eq. (\ref{eq:DM1})})
\end{align*}
\end{proof}

\section{Conclusion}
\label{Conclusion}

In this paper we proved a relationship between 
descent moment distributions and $q$-binomials. 
To accomplish this, we employed a lattice-path approach 
to prove pertinent lemmas.  
The relationship between descent moment distributions and 
$q$-binomials was then applied to determine the 
cardinality of $C_{\alpha, \beta, m}$. 

We have seen how the descent moment distribution 
has some interesting properties and may provide 
insights into other problems. 
Thus further investigation into descent moment distributions, 
especially as it relates to combinatorics, 
is a logical future research direction. 
Below we state two open questions 
regarding the subject. 

The natural open question 
is to extend the main results of this paper 
to ternary (or more) and then arbitrary $q$-multinomials. 
That is, the initial part of this open question is to prove a similar relationship 
for the descent moment distributions of ternary subsets of $\{A, B, C\}^l$ 
with fixed multiplicities of $A$, $B$, and $C$.

\bibliographystyle{plain}
\bibliography{reference}
\end{document}